\newcommand*\bigcdot{\mathpalette\bigcdot@{.5}}
\newcommand*\bigcdot@[2]{\mathbin{\vcenter{\hbox{\scalebox{#2}{$\m@th#1\bullet$}}}}}
\title{\LARGE
Adaptive Real-Time Numerical Differentiation\\
with Variable-Rate Forgetting and Exponential Resetting
}
\author{Shashank Verma, Brian Lai, and Dennis S. Bernstein%
\thanks{$^{*}$Shashank Verma, Brian Lai, and Dennis S. Bernstein are with the Department of Aerospace Engineering, University of Michigan, Ann Arbor, MI 48109, USA 
{\tt\small shaaero@umich.edu}}%
}
\begin{document}

\maketitle
\thispagestyle{empty}
\pagestyle{empty}

\begin{abstract}
Digital PID control requires a differencing operation to implement the D gain.
In order to suppress the effects of noisy data, the traditional approach is to filter the data, where the frequency response of the filter is adjusted manually based on the characteristics of the sensor noise.
The present paper considers the case where the characteristics of the sensor noise change over time in an unknown way.
This problem is addressed by applying adaptive real-time numerical differentiation based on adaptive input and state estimation (AISE).
The contribution of this paper is to extend AISE to include variable-rate forgetting with exponential resetting, which allows AISE to more rapidly respond to changing noise characteristics while enforcing the boundedness of the covariance matrix used in recursive least squares.
\end{abstract}

\section{INTRODUCTION}

Because of its simplicity of tuning and implementation as well as its ability to follow setpoint commands and reject step disturbances, PID control is without doubt the most widely used feedback control algorithm \cite{PIDastrom,ODwyer2009handbook}.
Although PID controllers can be implemented by analog circuits, the overwhelming trend is to implement these controllers digitally using sampled data \cite{knospe2006pid,keel2003new,porter1992genetic,bennett2001past,borase2021review,PID2012,PIDastrom}.
Since sampled data cannot be differentiated, implementation of the D gain is typically realized using backward differencing \cite{aastrom2013computer}.
Unfortunately, backward differencing is susceptible to sensor noise, and the practical remedy is to filter the data.
%
%
The frequency response of the filter can be adjusted manually based on the characteristics of the sensor noise, and this approach is routinely used in practice.

The motivation for the present paper is the case where the characteristics of the sensor noise change over time in an unknown way.
In an autonomous vehicle, for example, the accuracy of camera images may change depending on atmospheric conditions such as lighting and fog.
When the sensor noise is nonstationary, the  accuracy of a manually adjusted filter may be initial adequate but can be degraded under changing conditions.
The present paper addresses this problem by applying adaptive real-time numerical differentiation to PID control.

Numerical differentiation is a longstanding problem in signal processing
\cite{cullum,savitzky1964smoothing,ramm2,Jauberteau2009,Stickel2010,zhao2013,knowles_methods,verma2022acc_rcie_hgo_sg,haimovich2022}.
For the present purposes, however, we require real-time numerical differentiation, where only present and past data are used to estimate the signal derivative.
In other words, causal numerical differentiation is needed, which rules out the use of noncausal differentiation methods, which use future data.

For real-time numerical differentiation, we consider the technique developed in \cite{verma2023realtime}.
This technique is based on adaptive input estimation augmented by adaptive state estimation.
The combined adaptive input and state estimation (AISE) method uses recursive least squares (RLS) to adjust the coefficients of the approximate inverse model for input estimation along with the noise covariances of the Kalman filter used for state estimation.

The present paper extends AISE for real-time numerical differentiation by replacing classical RLS optimization by RLS with variable-rate forgetting (VRF).
Several techniques have been proposed for RLS/VRF
\cite{islam2019recursive,adamRLS2020,ankitRLS2020,bruceNandSRLS,mohseni2022recursive,lai2022exponential}.  
Forgetting is motivated by the practical need to rapidly update the parameters of AISE, while VRF is used to mitigate the tendency of RLS to diverge in the absence of persistency \cite{dasgupta1987,ankitRLS2020}.

For real-time numerical differentiation with VRF, the present paper uses a combination of the F-test developed in \cite{mohseni2022recursive} and exponential resetting developed in \cite{lai2022exponential}, where, to improve numerical conditioning, the latter technique enforces an upper bound on the covariance matrix.

The contents of the paper are as follows.
Section \ref{sec:aie_ase} provides a description of the adaptive input and state estimation (AISE) algorithm, incorporating variable-rate forgetting with exponential resetting (VRF/ER) recursive least squares.
Section \ref{sec:num_example} investigates the efficacy of this technique through two applications. 
First, in subsection \ref{sec:PID_prob_formu}, AISE and AISE/VRF-ER are employed for digital PID sampled-data control of first-order-lag-plus-dead-time dynamics. 
Following this, in subsection \ref{subsec: carsim example}, AISE and AISE/VRF-ER are applied to relative position data obtained from a vehicle simulation to estimate the relative velocity by numerical differentiation.


\section{ADAPTIVE INPUT AND STATE ESTIMATION}  \label{sec:aie_ase}
We now apply adaptive input and state estimation (AISE) \cite{verma2023realtime} to real-time numerical differentiation.
Consider the linear discrete-time SISO system
\begin{align}
	x_{k+1} &=  A x_{k} + Bd_{k}, 	\label{state_eqn}\\
	y_k  &= C x_k + D_{2,k} v_k, \label{output_eqn}
\end{align}
where
$k\ge0$ is the step,
$x_k \in \mathbb R^{n}$ is the unknown state,
$d_k \in \mathbb R$ is unknown input,
$y_k \in \mathbb R$ is a measured output,
$v_k \in \mathbb R$ is standard white noise, 
and $D_{2,k}v_k \in \mathbb R$ is the sensor noise at time $t = kT_\rms$, where $T_\rms$ is the sample time.
The matrices $A \in \mathbb R^{n \times n}$, $B \in \mathbb R^{n \times 1}$, and $C \in \mathbb R^{1 \times n}$, are assumed to be known and $D_{2,k}$ is assumed to be unknown.
The sensor-noise covariance is define as $V_{2,k} \isdef D_{2,k} D_{2,k}^\rmT$.
The {goal} of adaptive input estimation (AIE) is to estimate $d_k$ and $x_k$.

In the application of AIE to real-time numerical differentiation, we use \eqref{state_eqn} and \eqref{output_eqn} to model a discrete-time integrator. As a result, AIE furnishes an estimate denoted by $\hat{d}_k$ for the derivative of the sampled output $y_k$. 
For single discrete-time differentiation,  the values are $A = 1, B = T_\rms,$ and $C=1$. 
%

\subsection{Input Estimation}
AIE comprises three subsystems: the Kalman filter forecast subsystem, the input-estimation subsystem, and the Kalman filter data-assimilation subsystem.
First, consider the Kalman filter forecast step
\begin{gather}
	x_{{\rm fc},k+1} = A x_{{\rm da},k} + B \hat{d}_{k},	\label{kalman_fc_state}\\
	y_{{\rm fc},k} =  C x_{{\rm fc},k}, \label{kalman_fc_output}\\
	z_k = y_{{\rm fc},k} - y_k, 		\label{innov_error}
\end{gather}
where
$x_{\rm da,k} \in \mathbb R^{n}$ is the data-assimilation state, 
$x_{{\rm fc},k} \in \mathbb R^{n}$ is the forecast state,
$\hat d_k$ is the estimate of $d_k$, 
$y_{\rmf\rmc,k} \in \mathbb R$ is the forecast output,
$z_k \in \mathbb R$ is the residual, and $x_{{\rm fc},0} = 0$.

Next, in order to obtain $\hat{d}_k$, the input-estimation subsystem of order $n_\rme$ is given by the exactly proper dynamics
\begin{align}
\hat{d}_k = \sum\limits_{i=1}^{n_\rme} P_{i,k} \hat{d}_{k-i} + \sum\limits_{i=0}^{n_\rme} Q_{i,k} z_{k-i}, \label{estimate_law1}
\end{align}
%
where $P_{i,k} \in \BBR$ and $Q_{i,k} \in \BBR$.
%
%
AIE minimize $z_{k}$ by updating $P_{i,k}$ and $Q_{i,k}$ as shown below.
The subsystem \eqref{estimate_law1} can be reformulated as
\begin{align}
\hat{d}_k=\Phi_k \theta_k, \label{estimate_law12}
\end{align}
where the estimated coefficient vector $\theta_k \in \mathbb{R}^{l_{\theta}}$ is defined by
\begin{align}
\hspace{-0.2cm}\theta_k \isdef \begin{bmatrix}
P_{1,k} & \cdots & P_{n_{\rme},k} & Q_{0,k} & \cdots & Q_{n_{\rme},k}
\end{bmatrix}^{\rmT},
\end{align}
the regressor matrix $\Phi_k \in \mathbb{R}^{1 \times l_{\theta}}$ is defined by
\begin{align}
	\hspace{-0.2cm}\Phi_k \isdef
		\begin{bmatrix}
			\hat{d}_{k-1} &
			\cdots &
			\hat{d}_{k-n_{\rme}} &
			z_k &
			\cdots &
			z_{k-n_{\rme}}
		\end{bmatrix},
\end{align}
and $l_\theta \isdef  2n_{\rme} +1$.
%
The subsystem \eqref{estimate_law1} can be written using backward shift operator $\bfq^{-1}$ as
\begin{align}
   \hat{d}_{k} = G_{\hat{d}z,k}(\bfq^{-1})z_k,
\end{align}
where
\begin{align}
    G_{\hat{d}z,k} &\isdef D_{\hat{d}z, k}^{-1}  \it{N}_{\hat{d}z,k}, \label{d_hat_z_tf} \\
    D_{\hat{d}z,k}(\bfq^{-1}) &\isdef I_{l_d}-P_{1,k}\bfq^{-1} - \cdots-P_{n_\rme,k}\bfq^{-n_\rme}, \label{d_hat_z_tf_D} \\
    N_{\hat{d}z, k}(\bfq^{-1}) &\isdef Q_{0,k} + Q_{1,k} \bfq^{-1}+\cdots+Q_{n_\rme,k}\bfq^{-n_\rme}. \label{d_hat_z_tf_N}
\end{align}
%
%
%

Next, define the filtered signals
\begin{align}
\Phi_{{\rm f},k} &\isdef G_{{\rm f}, k}(\bfq^{-1}) \Phi_{k}, \quad
\hat{d}_{{\rm f},k} \isdef G_{{\rm f}, k}(\bfq^{-1}) \hat{d}_{k}, \label{eq:filtdhat}
\end{align}
%
%
where, for all $k\ge 0$,
\begin{align}
G_{{\rm f}, k}(\bfq^{-1}) = \sum\limits_{i=1}^{n_{\rm f}} \bfq^{-i}H_{i,k}, \label{Gf}
\end{align}
\begin{align}
H_{i,k} &\isdef \left\{
\begin{array}{ll}
C B, & k\ge i=1,\\
C \overline{A}_{k-1}\cdots \overline{A}_{k-(i-1)}  B, & k\ge i \ge 2, \\
0, & i>k,
\end{array}
\right. 
\end{align}
and $\overline{A}_k \isdef A(I + K_{{\rm da},k}C)$, where $K_{{\rm da},k}$ is the Kalman filter gain given by \eqref{kalman_gain} below.
%
%
%
Furthermore, for all $k \ge 0$, define the {\it retrospective variable} $z_{{\rm r},k} \colon \BBR^{l_\theta} \rightarrow \BBR$ by
\begin{align}
z_{{\rm r},k}(\hat{\theta}) \isdef z_k -( \hat{d}_{{\rm f},k} - \Phi_{{\rm f},k}\hat{\theta} ), \label{eq:RetrPerfVar} 
\end{align}
and define the \textit{retrospective cost function} $\SJ_k \colon \BBR^{l_\theta} \rightarrow \BBR$ by
%
%
\begin{align*}
    \SJ_k(\hat{\theta}) \isdef (\hat{\theta} - \theta_0)^\rmT R_{\theta} (\hat{\theta} - \theta_0) + \sum\limits_{i=0}^k R_z z_{{\rm r},i}^{2}(\hat{\theta}) +  R_{\rmd} (\Phi_i\hat{\theta})^2
\end{align*}
where $R_{\theta}\in\BBR^{l_{\theta} \times l_{\theta}}$ is positive definite, $R_z\in(0,\infty)$, and $R_d\in(0,\infty)$.
Then, for all $k\ge 0$, the unique global minimizer $\theta_{k+1} \triangleq \argmin_{\hat{\theta} \in \BBR^{l_\theta}} \SJ_k(\hat{\theta})$ is given recursively by the RLS update equations \cite{islam2019recursive} as
\begin{align}
P_{k+1}^{-1} &= P_k^{-1} + \widetilde{\Phi}_k^\rmT \widetilde{R} \widetilde{\Phi}_k, \label{covariance_update} \\
\theta_{k+1} &= \theta_{k} - P_{k+1} \widetilde{\Phi}^{\rmT}_{k} \widetilde{R} (\widetilde{z}_{k} + \widetilde{\Phi}_{k} \theta_{k}), \label{theta_update}
\end{align}
where $P_0 \isdef R_\theta^{-1}$, for all $k \ge 0$, positive-definite $P_k \in \BBR^{l_\theta \times l_\theta}$ is the covariance matrix, and where, for all $k \ge 0$,
\begin{gather*}
\widetilde{\Phi}_k \isdef \begin{bmatrix}
   \Phi_{\rmf, k}  \\
   \Phi_k   \\
\end{bmatrix}, \quad 
\widetilde{z}_k \isdef \begin{bmatrix}
   z_k-\hat{d}_{{\rm f},k}  \\
   0   \\
\end{bmatrix}, \quad
\widetilde{R} \isdef \begin{bmatrix}
   R_z & 0  \\
   0 & R_{\rmd}   \\
\end{bmatrix}.
\end{gather*}
Hence, \eqref{covariance_update} and \eqref{theta_update} recursively update the input-estimation subsystem \eqref{estimate_law1}.
\subsection{Recursive Least Squares with Variable-Rate Forgetting and Exponential Resetting} \label{sec:rls_var}
While AISE based on the RLS update equations, given by \eqref{covariance_update} and \eqref{theta_update}, has shown promise in preliminary testing \cite{verma2023realtime}, a major drawback of RLS is the monotonicity of the eigenvalues of the covariance matrix, resulting in slowed adaptation after a large amount of data has been collected \cite{ortega2020modified,Salgado1988modified}.
To mitigate slow adaptation, we combine AISE with two recent advances in RLS, namely, variable-rate forgetting based on the \textit{F}-Test \cite{mohseni2022recursive}, and exponential resetting \cite{lai2022exponential}.
We call this combination \textit{variable-rate forgetting with exponential resetting (VRF-ER)} RLS. 

VRF-ER replaces the inverse covariance update equation \eqref{covariance_update} with, for all $k \ge 0$, 
\begin{align}
P_{k+1}^{-1} &= \lambda_k P_k^{-1} + (1-\lambda_k)R_\infty + \widetilde{\Phi}_k^\rmT \widetilde{R} \widetilde{\Phi}_k, \label{covariance_update errls}
\end{align}
where the positive-definite matrix $R_\infty \in \BBR^{l_\theta \times l_\theta}$ is the user-selected \textit{resetting matrix} and, for all $k \ge 0$, $\lambda_k \in (0,1]$ is the \textit{forgetting factor}.
A forgetting factor $\lambda_k < 1$ allows the eigenvalues of $P_k$ to decrease, resulting in continued adaptation of the input-estimation subsystem \eqref{estimate_law1}, even after a large amount of data has been collected \cite{aastrom1977theory}. 
On the other hand, the resetting matrix $R_\infty$ prevents the eigenvalues of $P_k$ from becoming too large when excitation is poor \cite{lai2022exponential}, a phenomenon known as covariance windup \cite{malik1991some}.
See \cite{lai2022exponential} for further details.

Next, variable-rate forgetting based on the \textit{F}-test \cite{mohseni2022recursive} is used, for all $k \ge 0$, to select the forgetting factor $\lambda_k \in (0,1]$.
For all $k \ge 0$, we define the \textit{residual error} at step $k$ as
\begin{align}
    \varepsilon_k \isdef \widetilde{z}_{k} + \widetilde{\Phi}_{k} \theta_{k} \in \BBR^2.
\end{align}
Note that the residual error is a metric of how well the input-estimation subsystem \eqref{estimate_law1} predicts the input one step into the future.
Furthermore, for all $k \ge 0$, define the sample mean of the residual errors over the past $\tau \ge 1$ steps as
\begin{align}
    \bar{\varepsilon}_{\tau,k} \isdef \frac{1}{\tau} \sum_{i=k - \tau + 1}^k {\varepsilon_i} \in \BBR^2,
\end{align}
and define the sample variance of the residual errors over the past $\tau$ steps as
\begin{align}
    \Sigma_{\tau,k} \isdef \frac{1}{\tau}  \sum_{i=k - \tau + 1}^k (\varepsilon_i - \bar{\varepsilon}_{\tau,k})(\varepsilon_i - \bar{\varepsilon}_{\tau,k})^\rmT \in \BBR^{2 \times 2}.
\end{align}

The approach in \cite{mohseni2022recursive} compares $\Sigma_{\tau_n,k}$ to $\Sigma_{\tau_d,k}$, where $\tau_n \ge 1$ is the short-term sample size and $\tau_d > \tau_n$ is the long-term sample size.
If the short-term variance $\Sigma_{\tau_n,k}$ is more statistically significant than the long term variance $\Sigma_{\tau_d,k}$ according to the Lawley-Hotelling trace approximation \cite{mckeon1974f}, then $\lambda_k < 1$ is chosen inversely proportional to the statistical significance.
If not, then $\lambda_k = 1$.
In particular, for all $k \ge 0$, the forgetting factor is selected as 
\begin{align}
    \lambda_k \isdef \frac{1}{1 + \eta g_k \mathbf{1}[g_k]}, \label{VRF_eq}
\end{align}
where $\eta \ge 0$ is a tuning parameter, $\mathbf{1} \colon \BBR \rightarrow \{0,1\}$ is the unit step function, and, for all $k \ge 0$,
\begin{align}
    g_k \isdef& \sqrt{  \frac{\tau_n}{\tau_d} \frac{\tr(\Sigma_{\tau_n,k} \Sigma_{\tau_d,k}^{-1})}{c} }
    - \sqrt{F_{2 \tau_n,b}^{-1} (1-\alpha)},
    \\
    a \isdef& \frac{(\tau_n + \tau_d - 3)(\tau_d -1)}{(\tau_d - 5)(\tau_d - 2)},
    \\
    b \isdef& 4 + \frac{2(\tau_n + 1)}{a-1}, \quad c \isdef \frac{2\tau_n(b-2)}{b(\tau_d - 3)},
\end{align}
and where $\alpha \in [0,1]$ is the significance level and $F_{2 \tau_n,b}^{-1} \colon [0,1] \rightarrow \BBR$ is the inverse cumulative distribution function of the \textit{F}-distribution with degrees of freedom $2 \tau_n$ and $b$. 
For further details, see \cite{mohseni2022recursive} and \cite{mckeon1974f}.

\subsection{State Estimation}

The forecast variable $x_{{\rm fc},k}$, given by \eqref{kalman_fc_state}, is used to obtain the estimate $x_{{\rm da},k}$ of $x_k$ given, for all $k \ge 0$, by the Kalman filter data-assimilation step
\begin{align}
x_{{\rm da},k} &= x_{{\rm fc},k} + K_{{\rm da},k} z_k, \label{kalman_da_state}
\end{align}
where the Kalman filter gain $K_{{\rm da},k} \in \mathbb R^{n}$, the data-assimilation error covariance $P_{{\rm da},k} \in \mathbb R^{n \times n},$
and the forecast error covariance $P_{\rmf,k+1} \in \mathbb R^{n \times n}$ are given by
\begin{align}
    K_{{\rm da},k} &= - P_{\rmf,k}C^{\rmT} ( C P_{\rmf,k} C^{\rmT} + V_{2,k}) ^{-1}, \label{kalman_gain} \\
    P_{{\rm da},k} &=  (I_{n}+K_{{\rm da},k}C) P_{\rmf,k},\label{Pda} \\
	P_{\rmf,k+1} &=  A P_{{\rm da},k}A^{\rmT} + V_{1,k}, \label{Pf}
\end{align}
where $V_{2,k} \in \mathbb R$ is the measurement covariance matrix and 
\begin{align}
    V_{1,k}\isdef & B\ {\rm var}\ (d_k-\hat{d}_k)B^\rmT \nonumber
    \\
    & + A\ {\rm cov}\ (x_k - x_{{\rm da},k},d_k-\hat{d}_k)B^\rmT \nonumber
    \\
    & + B\ {\rm cov}\ (d_k-\hat{d}_k,x_k - x_{{\rm da},k})A^\rmT
\end{align}
and $P_{\rmf,0} = 0.$ 
\subsection{Adaptive Input and State Estimation} \label{sec:AdapInptStateEst}


This section summarizes adaptive input and state estimation  (AISE). 
Assuming that, for all $k \ge 0$, $V_{1,k}$ and $V_{2,k}$ are unknown in (\ref{Pf}) and \eqref{kalman_gain},
%
%
%
the goal is to adapt ${V}_{{1,\rm adapt},k}$ and ${V}_{{2,\rm adapt},k}$ at each step $k$ to estimate $V_{1,k}$ and $V_{2,k}$, respectively.
To do this, we define, for all $k \ge 0$, the 
performance metric $J_k \colon \BBR^{n\times n} \times \BBR \rightarrow \BBR$ as
\begin{align}
  {J}_{k}({V}_{1},{V}_{2}) \isdef |\widehat{S}_{ k}-{S}_{ k}|, \label{J_daptmetric}
\end{align}
where $\widehat{S}_{ k}$ is the sample variance of $z_k$ over $[0,k]$ given by
\begin{align}
    \widehat{S}_{k} \isdef \cfrac{1}{k}\sum^{k}_{i=0}(z_i - \overline{z}_k)^2, \quad
    \overline{z}_k \isdef \cfrac{1}{k+1}\sum^{k}_{i=0}z_i,  \label{var_comp}
\end{align}
and ${S}_{k}$ is the variance of the residual $z_k$ given by the Kalman filter, defined as
\begin{align}
    {S}_{k} \isdef  C (A P_{{\rm da},k-1}A^{\rmT} + V_{1}) C^{\rm T} + V_{2}.  \label{var_inno}
\end{align}
%
%

%
%
%
%
%
%
%
%
For all $k \ge 0$, we assume that ${V}_{{1,\rm adapt},k}  \triangleq \eta_k I_n$ and we define $\eta_k \in \BBR$ and ${V}_{{2,\rm adapt},k}$ as
%
%
%
%
%
%
%
\begin{align}
       \eta_k,{V}_{{2,\rm adapt},k} &\isdef \underset{ \eta \in [\eta_{L},\eta_{\rmU}],{V}_{2} \ge 0}{\arg\min} \ J_k(\eta I_{n},V_{2}), \label{covmin}
\end{align} 

where
$0 \le \eta_{L} \le \eta_{\rmU}.$
Next, defining ${J}_{\rmf,k} \colon \BBR \rightarrow \BBR $ as
\begin{align}
    {J}_{\rmf,k}(V_{1}) \isdef \widehat{S}_{k} - C (A P_{{\rm da},k-1}A^{\rmT} + V_{1})  C^{\rm T} \label{J1_func}
\end{align}
and using \eqref{var_inno}, (\ref{J_daptmetric}) can be rewritten as
\begin{align}
    {J}_k({V}_{1},{V}_{2}) = |{J}_{\rmf,k}(V_{1})-V_{2}|. \label{J_daptmetric_V2}
\end{align}
We construct a set of positive values of ${J}_{\rmf,k}$ as 
%
%
\begin{align}
      \SJ_{\rmf,k} \isdef \{J_{\rmf,k}(\eta I_{n}) \colon J_{\rmf,k}(\eta I_{n}) > 0, \eta_{L} \le\eta \le\eta_{\rmU}\} \subseteq \BBR. \label{J_f_positive}
\end{align}
%
%
Finally, Proposition \ref{prop: eta_k and V2,adapt minimizer} gives a method to compute $\eta_k$ and ${V}_{{2,\rm adapt},k}$, defined in \eqref{covmin}.
%

\begin{prop}
\label{prop: eta_k and V2,adapt minimizer}
    Let $k \ge 0$ and let $\eta_k \in [\eta_L,\eta_U]$ and $V_{2,k} \ge 0$ be given by \eqref{covmin}.
    If $\SJ_{\rmf,k}$, defined in \eqref{J_f_positive}, is nonempty, then, for any $\beta \in [0,1]$, $\eta_k$ and $V_{2,k}$ are given by
    \begin{align}
        \eta_k &= \underset{\eta \in [\eta_L,\eta_U]}{\arg \min} \ |J_{\rmf,k}(\eta I_{n}) -  \widehat{J}_{\rmf,k}(\beta)|,\\
        {V}_{{2,\rm adapt},k} &= J_{\rmf,k}(\eta_k I_n),
        \label{v_2_opt_1}
    \end{align}
    where
\begin{align}
       \widehat{J}_{\rmf,k}(\beta) \isdef \beta \min \SJ_{\rmf,k}+(1-\beta)\max \SJ_{\rmf,k}, \label{alpha1}
\end{align}
    If $\SJ_{\rmf,k}$ is empty, then $\eta_k$ and $V_{2,k}$ are given by
    \begin{align}
        \eta_k &= \underset{\eta \in [\eta_L,\eta_U]}{\arg \min} \ |J_{\rmf,k}(\eta I_{n})|,\\
        {V}_{{2,\rm adapt},k} &= 0. \label{v_2_opt_1}
    \end{align}
\end{prop}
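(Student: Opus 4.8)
The plan is to use the equivalent representation \eqref{J_daptmetric_V2}, $J_k(\eta I_n, V_2) = |J_{\rmf,k}(\eta I_n) - V_2|$, and to solve the joint minimization \eqref{covmin} by first eliminating $V_2$ and then optimizing over $\eta$. First I would fix $\eta \in [\eta_L, \eta_U]$ and minimize over $V_2 \ge 0$. Because $J_k$ is the absolute deviation of $V_2$ from the scalar $J_{\rmf,k}(\eta I_n)$, the inner minimizer is obtained by projecting $J_{\rmf,k}(\eta I_n)$ onto the feasible half-line $[0,\infty)$: if $J_{\rmf,k}(\eta I_n) \ge 0$, the choice $V_2 = J_{\rmf,k}(\eta I_n)$ drives $J_k$ to $0$; if $J_{\rmf,k}(\eta I_n) < 0$, the constraint forces $V_2 = 0$ and leaves $J_k = -J_{\rmf,k}(\eta I_n) > 0$. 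Hence the reduced cost is $\min_{V_2 \ge 0} J_k(\eta I_n, V_2) = \max\{0, -J_{\rmf,k}(\eta I_n)\}$, with the inner-optimal $V_2 = \max\{0, J_{\rmf,k}(\eta I_n)\}$.

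Next I would record the structural fact driving the outer minimization: by \eqref{J1_func}, $J_{\rmf,k}(\eta I_n) = \widehat{S}_k - C A P_{\mathrm{da},k-1} A^{\rmT} C^{\rmT} - \eta\, C C^{\rmT}$ is affine in $\eta$, and (since $C C^{\rmT} > 0$) continuous and strictly decreasing. Consequently, as $\eta$ sweeps $[\eta_L, \eta_U]$, $J_{\rmf,k}(\eta I_n)$ attains every value in $[J_{\rmf,k}(\eta_U I_n), J_{\rmf,k}(\eta_L I_n)]$; this intermediate-value property is what lets me realize any prescribed target value of $J_{\rmf,k}$ by a feasible $\eta$.

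With these in hand I would split into the two cases of the statement. If $\SJ_{\rmf,k} \neq \emptyset$, some feasible $\eta$ makes $J_{\rmf,k}(\eta I_n) > 0$, so the reduced cost attains its floor $0$ and the joint minimum of $J_k$ is $0$; the minimizers are exactly the $\eta$ with $J_{\rmf,k}(\eta I_n) \ge 0$ paired with $V_2 = J_{\rmf,k}(\eta I_n)$, which is a continuum, and this non-uniqueness is precisely what the free parameter $\beta$ resolves. Since $\widehat{J}_{\rmf,k}(\beta)$ in \eqref{alpha1} is a convex combination of $\min \SJ_{\rmf,k}$ and $\max \SJ_{\rmf,k}$, it is nonnegative and lies in the range of $J_{\rmf,k}$ over $[\eta_L,\eta_U]$; by the intermediate-value property the argmin defining $\eta_k$ achieves $J_{\rmf,k}(\eta_k I_n) = \widehat{J}_{\rmf,k}(\beta)$, so setting $V_{2,\mathrm{adapt},k} = J_{\rmf,k}(\eta_k I_n) \ge 0$ yields $J_k = 0$, a valid minimizer for every $\beta$. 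If $\SJ_{\rmf,k} = \emptyset$, then $J_{\rmf,k}(\eta I_n) \le 0$ for all feasible $\eta$, so $V_2 = 0$ is optimal throughout and the reduced cost equals $|J_{\rmf,k}(\eta I_n)|$; minimizing this over $\eta$ gives the stated formulas with $V_{2,\mathrm{adapt},k} = 0$.

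I expect the principal difficulty to be conceptual rather than computational: because \eqref{covmin} has a continuum of minimizers in the nonempty case, the proposition must be read as exhibiting one valid minimizer per $\beta$, so the argument must (i) certify that $J_k = 0$ is genuinely attainable and (ii) verify that the particular $\beta$-weighted target is reachable by a feasible $\eta$ via continuity. A secondary technical point to treat carefully is that when $J_{\rmf,k}(\eta_U I_n) \le 0$ the positive-value set $\SJ_{\rmf,k}$ is only half-open, so $\min \SJ_{\rmf,k}$ should be interpreted as an infimum (or as the minimum over the finite $\eta$-grid used in implementation); since the convex combination stays nonnegative and attainable, the conclusion is unaffected.
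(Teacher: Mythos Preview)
Your proposal is correct and follows essentially the same route as the paper: both arguments use the identity $J_k(\eta I_n,V_2)=|J_{\rmf,k}(\eta I_n)-V_2|$ and, in the nonempty case, verify optimality by exhibiting a feasible pair with $J_k=0$ via $V_2=J_{\rmf,k}(\eta_k I_n)\ge 0$, while in the empty case they fix $V_2=0$ and minimize $|J_{\rmf,k}(\eta I_n)|$. Your version is somewhat more explicit---you carry out the inner--outer elimination of $V_2$ and invoke the affine, strictly decreasing structure of $\eta\mapsto J_{\rmf,k}(\eta I_n)$ to justify reachability of $\widehat{J}_{\rmf,k}(\beta)$ and hence $J_{\rmf,k}(\eta_k I_n)\ge 0$---whereas the paper simply substitutes and appeals to a grid search, but the underlying argument is the same.
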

\begin{proof}
Under the constrain that $\eta_k$ and $V_{2,k}$ are non-negative, at step $k$, if $\SJ_{\rmf,k}$ in \eqref{J_f_positive} is non-empty, we compute $\widehat{J}_{\rmf,k}(\beta)$ for the user-provided $\beta$ in \eqref{alpha1}. Subsequently, we select $\eta_k \in [\eta_{L}, \eta_U]$ using a grid search to minimize
\begin{align*}
    \eta_k &= \underset{\eta \in [\eta_L,\eta_U]}{\arg \min} \ |J_{\rmf,k}(\eta I_{n}) - \widehat{J}_{\rmf,k}(\beta)|, \\
    {V}_{{1,\rm adapt},k} &= \eta_k I_n.
\end{align*}
We then choose ${V}_{{2,\rm adapt},k} = J_{\rmf,k}({V}_{{1,\rm adapt},k}) = J_{\rmf,k}( \eta_k I_n) > 0$ such that
\begin{align*}
    {J}_k({V}_{{1,\rm adapt},k},{V}_{{2,\rm adapt},k}) &= |{J}_{\rmf,k}({V}_{{1,\rm adapt},k})-{V}_{{2,\rm adapt},k}| \\
    & = |{J}_{\rmf,k}(\eta_k I_n) -   J_{\rmf,k}( \eta_k I_n) | \\
    & = 0. 
\end{align*}
This demonstrates that both $\eta_k$ and ${V}_{{2,\rm adapt},k} = J_{\rmf,k}( \eta_k I_n)$ constitute an optimal solution for \eqref{covmin}.

When $\SJ_{\rmf,k}$ is empty, indicating that $J_{\rmf,k}(\eta I_{n}) < 0$ for all $\eta \in [\eta_L, \eta_U]$. We minimize $J_k$ in \eqref{J_daptmetric_V2} by choosing ${V}_{{2,\rm adapt},k} = 0$. We then select $\eta_k \in [\eta_{L}, \eta_U]$ using a grid search such that
\begin{align*}
    \eta_k &= \underset{\eta \in [\eta_L,\eta_U]}{\arg \min} \ |J_{\rmf,k}(\eta I_{n}) - 0|, \\
    {V}_{{1,\rm adapt},k} &= \eta_k I_n.
\end{align*}
This establishes that both $\eta_k$ and ${V}_{{2,\rm adapt},k} = 0$ constitute an optimal solution for  \eqref{covmin} when $\SJ_{\rmf,k}$ is empty.
\end{proof}

 


%
%


In Section \ref{sec:rls_var}, we introduced two variations of RLS, resulting in AISE and AISE/VRF-ER, both of which are summarized in Table \ref{Tab:rls_comparison}.
%
\begin{center} 
{\renewcommand{\arraystretch}{2}%
\begin{tabular}{ |l|l|l| }
 \hline
 \textbf{\makecell{AISE \\variation}}  & \textbf{ \makecell{Covariance \\update equation}} \\
 \hline
  AISE  & $P_{k+1}^{-1} = P_k^{-1} + \widetilde{\Phi}_k^\rmT \widetilde{R} \widetilde{\Phi}_k$ \\
 \hline
 AISE/VRF-ER  &   \makecell{$P_{k+1}^{-1} =\lambda_k P_k^{-1}$ \\$ + (1-\lambda_k)R_\infty +  \widetilde{\Phi}_k^\rmT \widetilde{R} \widetilde{\Phi}_k$} \\
 \hline
\end{tabular}
\captionof{table}{\it Variations of AISE with their respective covariance update. $\lambda_k$ is given by \eqref{VRF_eq}.} \label{Tab:rls_comparison}}
\end{center}

\vspace{-5mm}
\section{NUMERICAL EXAMPLES}  \label{sec:num_example}
\vspace{-1mm}
In this section, we compare the performance of AISE with AISE/VRF-ER. We present two illustrative examples: firstly, digital PID control of first-order-lag-plus-dead-time dynamics using adaptive numerical differentiation, and, secondly, velocity estimation using noisy position data from autonomous vehicles for collision avoidance.

\subsection{Digital PID Control} \label{sec:PID_prob_formu}
The continuous-time first-order-lag-plus-dead-time dynamics are given by the transfer function
\begin{align}
	{G}(s) \isdef  \frac{{K}e^{-\tau_d s}}{\tau_c s +1}, 	\label{tf_plant_continuous}
\end{align}
where $K > 0$ is the DC gain, $\tau_c > 0$ is the time constant, and $\tau_d \geq 0$ is the dead time.
The zero-order-hold discretization of \eqref{tf_plant_continuous} is given by 
\begin{align}
	{G}_{\rmd}(z) \isdef  \frac{{K(1-\gamma)}}{z^{n_{\rmd}}(z - \gamma)}. 	\label{tf_plant_discrete_hat}
\end{align}
where $\gamma \isdef e^{\frac{T_\rms}{\tau_c}} \in (0,1)$ and $n_{\rmd} \isdef {\tau_{\rmd}}/{T_{\rms}}.$ 
The PID controller has the discrete-time, linear time-invariant transfer function
\begin{align}
	{G}_{\rmc}(z) \isdef  {K_{\rmP}} + \frac{K_{\rmI}}{z-1} - \frac{K_{\rmD}(z-1)}{T_\rms},
\end{align}
where ${K_{\rmP}}$ is the proportional gain, ${K_{\rmD}}$ is the derivative gain, and ${K_{\rmI}}$ is the integrator gain. The discretized plant ${G}_{\rmd}$ is controlled by the PID controller ${G}_{\rmc}$, whose input is the error $e_k$ and whose output is the control signal $u_k$. The plant output is denoted by $y_k$. In the presence of sensor noise $\eta_k$, the measurement is denoted by the noisy sensor output $y_{\rmm,k} \isdef y_k + \eta_k$.
The servo loop consisting of discrete-time, first-order-lag-plus-dead-time dynamics with a discrete-time PID controller is shown in Figure \ref{fig:PID_block_dig}, where $r_k$ is the command and $e_k \isdef r_k - y_{\rmm,k}$ is the error.

\begin{figure}[h!t]
    \begin{center}
    {\includegraphics[width=0.8\linewidth]{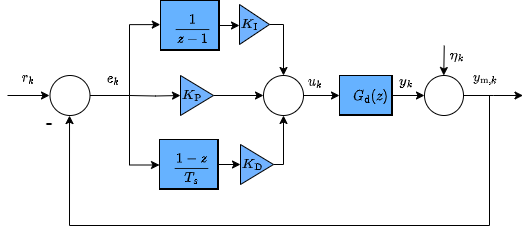}}
    \end{center}\vspace{-3mm}
    \caption{{\it Servo loop consisting of discrete-time first-order-lag-plus-dead-time dynamics with a discrete-time PID controller, where $\eta_{k}$ is the sensor noise.}} 
    \label{fig:PID_block_dig}
\end{figure}

\vspace{-3mm}
The control $u_k$ is written as 
\begin{align}
    u_k \isdef  u_{\rmp,k} + u_{\rmi,k} + u_{\rmd,k}, \label{u_control_input}
\end{align}
where
\begin{align}
    u_{\rmp,k} &\isdef K_{\rmP}e_k, \quad
    u_{\rmi,k} \isdef \frac{K_{\rmI}}{z-1}e_k,\\
    u_{\rmd,k} &\isdef - \frac{K_{\rmD}(z-1)}{T_\rms}e_k = K_{\rmD}\frac{(e_{k}-e_{k-1})}{T_\rms}.  \label{u_derivative}
\end{align}
The derivative action $u_{\rmd,k}$ for the control signal $u_k$ is computed using the backward difference (BD) method applied to the error $e_k$ as shown in \eqref{u_derivative}. In the presence of sensor noise, the error $e_k$ becomes noisy, resulting in a noisy estimate of the derivative of $e_k$ using the BD. We use AISE to estimate the derivative of $e_k$ for computing $u_{\rmd,k}$. 
The servo loop consisting of discrete-time first-order-lag-plus-dead-time dynamics with a discrete-time PID controller combined with adaptive differentiation (PID/AD) 
is shown in Figure \ref{fig:PIAD_block_dig}, where the ``AISE'' indicates that AISE is used to adaptively differentiate $e_k$.
\begin{figure}[h!t]
    \begin{center}
    {\includegraphics[width=0.8\linewidth]{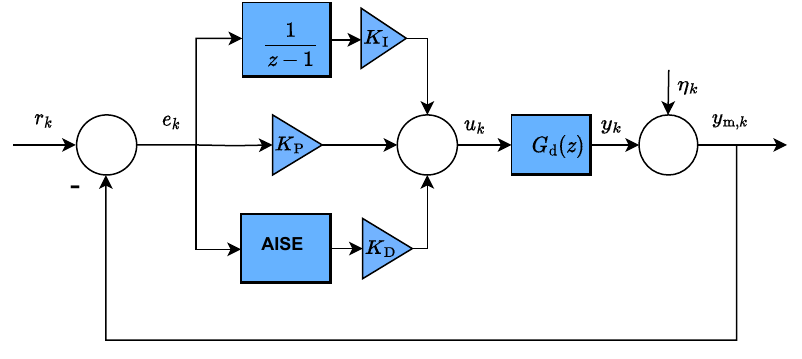}}
    \end{center}\vspace{-3mm}
    \caption{{\it Servo loop consisting of discrete-time first-order-lag-plus-dead-time dynamics with a discrete-time PID controller combined with adaptive differentiation (PID/AD).}}
    \label{fig:PIAD_block_dig}
\end{figure} 
\vspace{-0.0in}

We compare the performance of AISE and AISE/VRF-ER as adaptive differentiators in PID/AD controllers with conventional PID controllers. With conventional PID controllers, we refine the noisy derivative of the error by augmenting BD with Butterworth (BW) and moving-average (MA) filters. To assess the accuracy of the step response, we define the root-mean-square error (RMSE) as
\begin{align}
 {\rm RMSE} \isdef
\sqrt{{\displaystyle\sum_{k=1}^{N}\frac{({y}_{k}-\Bar{y}_{k})^2}{N}}},  \label{rms}
\end{align} 
where $\Bar{y}_k$ denotes the step response of the PID controller using BD in the absence of sensor noise. We treat $\Bar{y}_k$ as ground truth for computing RMSE for various numerical differentiation algorithms.



%
\begin{exam} \label{pid}
      {\it Digital PID control}.
      This example compares the accuracy of the step response of discrete-time first-order-lag-plus-dead-time dynamics under PID control with various numerical differentiation techniques, namely, BD, BD with the MA filter (BD/MA), BD with the BW filter (BD/BW), AISE, and AISE/VRF-ER.
      The key feature of the problem is the presence of nonstationary sensor noise. 

    The parameters in \eqref{tf_plant_continuous} are $K = 1,$ $\tau_{\rmc} = 1,$ $\tau_{\rmd} = 1,$ and $T_{\rms} = 0.01$ sec.
    The gains of the PID controller are $K_{\rmp} = 1.5,$ $ K_{\rmi} = 1.0,$ $ K_{\rmd} = 0.25$.
    The nonstationary sensor noise is given by $D_2v$, where $v$ is white Gaussian noise and $D_2 = 1.5$ for $1 \leq k \leq 1999$ and $D_2 = 1$ for $2000 \leq k \leq 3501$, which results in a signal-to-noise ratio (SNR) of 8.87 dB and 12.22 dB, respectively.
    %
    %
    %
    %
    %



    For the MA filter, the window size is 10 data points, and the BW filter is $5^{th}$ order with a cutoff frequency of $0.6\pi$ rad/step. 
    For AISE, let $n_\rme = 12$, $n_\rmf = 20,$ $ R_z = 1, R_d = 10^{-7}, R_\theta = 10^{-0.1}I_{25}$, and ${V_{1},V_{2}}$ are adapted, where $\eta_{\rmL} = 10^{-6}$, $\eta_{\rmU} = 10^{-2}$, and $\beta = 0.55$ as in Section \ref{sec:AdapInptStateEst}.
    For AISE/VRF-ER,  the parameters are the same as those of AISE, and, for VRF-ER, $\eta = 0.5, t_n = 20, t_d = 80, \alpha = 0.08$, and $R_{\infty} = 50.$ 

    In Figure \ref{fig:PI_res}, the step response $y_k$ of the plant and the control input $u_k$ with the PI controller are presented, demonstrating their behavior in the absence of sensor noise. 
    Figure \ref{fig:pid_res_absence} shows the step response $\Bar{y}_k$ of the plant and control input $u_k$ with the PID controller in the absence of the sensor noise. 

    Figure \ref{fig:pid_res_pres_BD} shows the step response with the PID controller in the presence of sensor noise using BD for numerical differentiation. The step response is noisy because of the numerical differentiation of the error signal $e_k$ using BD. 
    Figure \ref{fig:pid_res_pres_MA} shows the step response with the PID controller in the presence of sensor noise using BD/MA for numerical differentiation. The step response is less noisy than the step response of the PID controller with BD differentiation because $e_k$ is numerically differentiated using BD and filtered with the MA which averaged the noise. 
    Figure \ref{fig:pid_res_pres_BW} shows the step response with the PID controller in the presence of sensor noise using BD/BW for numerical differentiation.

    Figure \ref{fig:pid_res_pres_aise} shows the step response with the PID/AD controller in the presence of sensor noise. The numerical differentiation of the error signal $e_k$ is computed using AISE. 
    Figure \ref{fig:theta_pid_aise} shows the eigenvalues of the covariance matrix $P_k$ and the parameter vector $\theta_k$ of AISE. The eigenvalues of $P_k$ decrease monotonically and the $\theta_k$ have not converged.   
    Figure \ref{fig:pid_res_pres_aise_vrf_er} shows the step response with the PID/AD controller in the presence of sensor noise. The numerical differentiation of the error signal $e_k$ is computed using AISE/VRF-ER.
    Figure \ref{fig:theta_pid_aise_vrf_er} shows the eigenvalues of the covariance matrix $P_k$ and the parameter vector $\theta_k$ of AISE/VRF-ER. ER guarantees that, for all $k \ge 0$, $\eigmax(P_{k+1}) \le \max \{ \eigmax(P_k),\eigmax(R_\infty^{-1}) \}$, where $\eigmax(R_\infty^{-1})$ is shown by the red dashed line in Figure \ref{fig:theta_pid_aise_vrf_er}(a) (see Corollary 1 of \cite{lai2022exponential}).    
    Hence, the eigenvalues of $P_k$ are eventually bounded by $\eigmax(R_\infty^{-1}),$ and the parameters $\theta_k$ converge.
    Figure \ref{fig:theta_pid_aise_vrf_er}(b) shows the variable-rate-forgetting factor $\lambda_k$ for the AISE/VRF-ER. The  RMSE \eqref{rms} for the step responses obtained using various numerical differentiation methods are summarized in Table \ref{Tab:step_response_RMSE}.

\vspace{4mm}
\begin{center} 
{\renewcommand{\arraystretch}{2}%
\begin{tabular}{ |l|l| }
 \hline
 \textbf{\makecell{Method}}  & \textbf{ \makecell{RMSE}} \\
 \hline
 BD &  $0.1904$\\
 \hline
  BD/MA &   $0.1302$ \\
 \hline
  BD/BW &  $0.2830$\\
 \hline
  AISE &  $0.1201$ \\
   \hline
  AISE/VRF-ER &  $0.0998$ \\
 \hline
\end{tabular}
\captionof{table}{\it RMSE of the step response of the PID controllers for several numerical differentiation methods.} \label{Tab:step_response_RMSE}}
\end{center}

          \begin{figure}[h!t]
              \begin{center}{\includegraphics[width=1\linewidth]{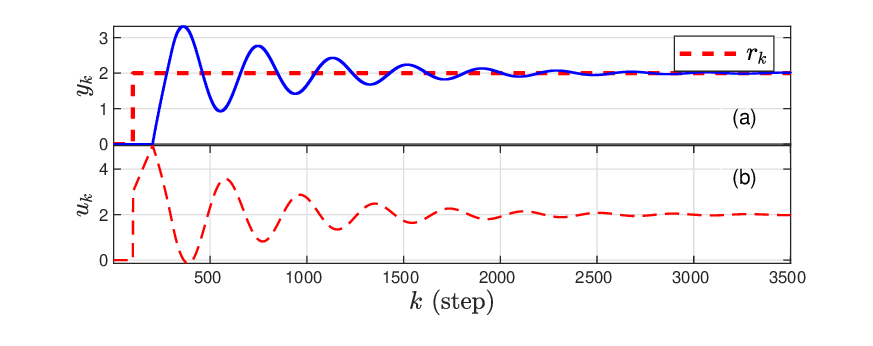}}
            \end{center}\vspace{-3mm}
            \caption{{\it  Example \ref{pid}: (a) shows the step response $y_k$ of the PI controller in the absence of sensor noise, where, without the derivative action, the settling time is 20 sec. (b) shows the control $u_k$.}} 
            \label{fig:PI_res}
          \end{figure} 
          \vspace{-3mm}
          \begin{figure}[h!t]
              \begin{center}{\includegraphics[width=1\linewidth]{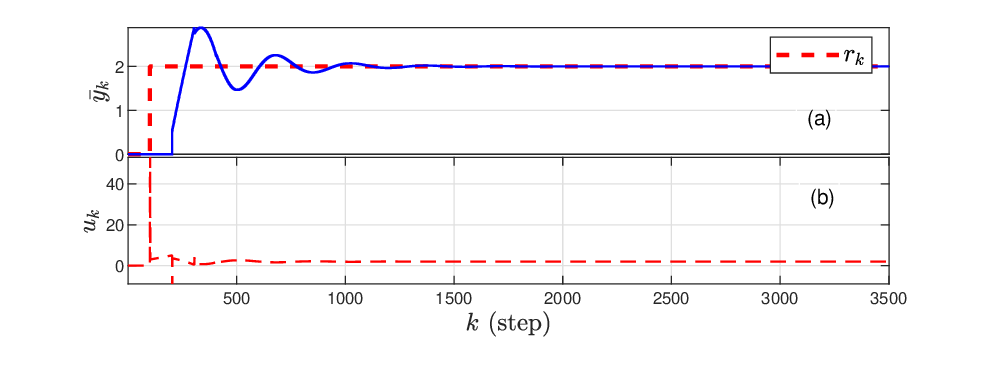}}
            \end{center}\vspace{-3mm}
            \caption{{\it  Example \ref{pid}: (a) shows the step response $\Bar{y}_k$ of the PID controller in the absence of sensor noise using BD for numerical differentiation. (b) shows the control $u_k$.}} 
            \label{fig:pid_res_absence}
          \end{figure}
          \begin{figure}[h!t]
            \begin{center}{\includegraphics[width=1\linewidth]{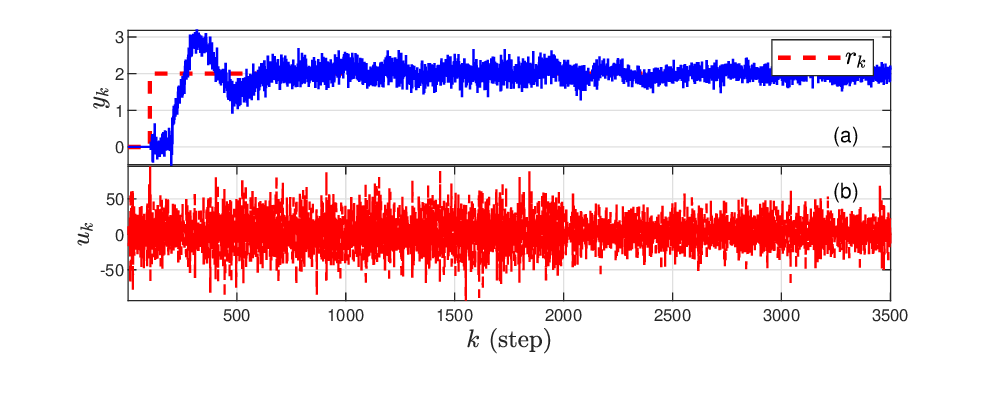}}
          \end{center}\vspace{-3mm}
          \caption{{\it  Example \ref{pid}: (a) shows the step response $y_{k}$ of the PID controller in the presence of sensor noise using BD for numerical differentiation. The steady-state response is noisy. The ${\rm RMSE}$ is $0.1904$. (b) shows the control $u_k$.}} 
          \label{fig:pid_res_pres_BD}
          \end{figure}
                    \begin{figure}[h!t]
            \begin{center}{\includegraphics[width=1\linewidth]{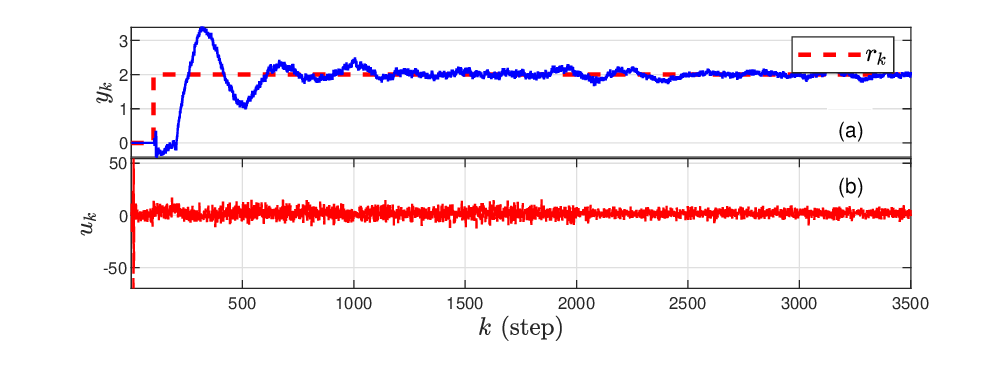}}
          \end{center}\vspace{-3mm}
          \caption{{\it  Example \ref{pid}: (a) shows the step response $y_{k}$ of the PID controller using BD/MA for numerical differentiation, where the MA filter of window size of 10 data points in the presence of sensor noise. The ${\rm RMSE}$ is $0.1302$. (b) shows the control $u_k$.}} 
          \label{fig:pid_res_pres_MA}
          \end{figure}
                    \begin{figure}[h!t]
            \begin{center}{\includegraphics[width=1\linewidth]{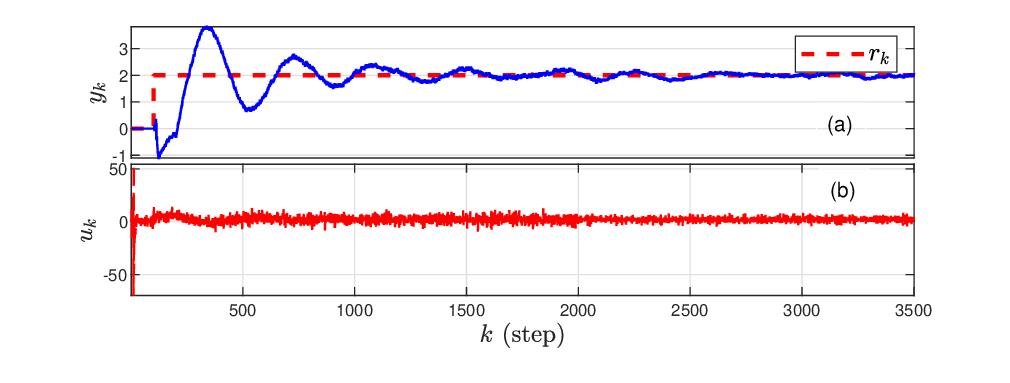}}
          \end{center}\vspace{-3mm}
          \caption{{\it  Example \ref{pid}: (a) shows the step response $y_{k}$ of the PID controller using BD/BW for numerical differentiation, where the BW filter is of  $5^{\rmt \rmh}$ order with a cutoff frequency of $0.6\pi$ rad/step. in the presence of sensor noise. The ${\rm RMSE}$ is $0.2830$. (b) shows the control $u_k$.}} 
          \label{fig:pid_res_pres_BW}
          \end{figure}
           \begin{figure}[h!t]
            \begin{center}{\includegraphics[width=1\linewidth]{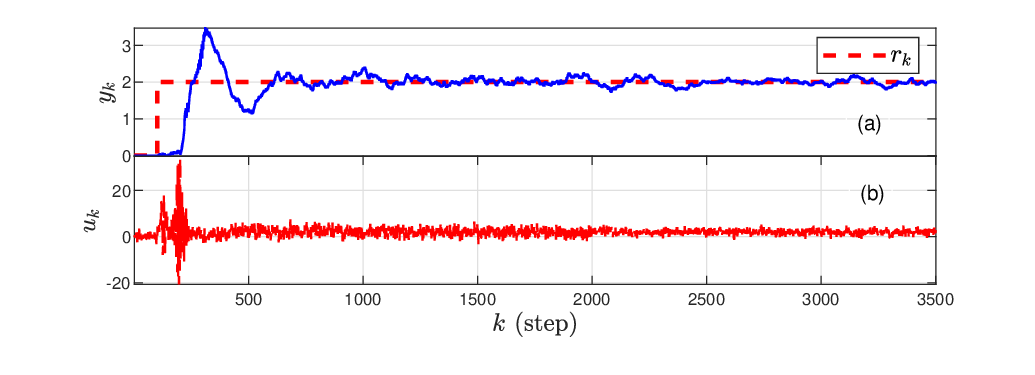}}
          \end{center}\vspace{-3mm}
          \caption{{\it  Example \ref{pid}: (a) shows the step response $y_{k}$ of the PID/AD controller with AISE in the presence of sensor noise. The ${\rm RMSE}$ is $0.1201$. (b) shows the control $u_k$.}} 
          \label{fig:pid_res_pres_aise}
          \end{figure}
          \begin{figure}[h!t]
            \begin{center}{\includegraphics[width=1\linewidth]{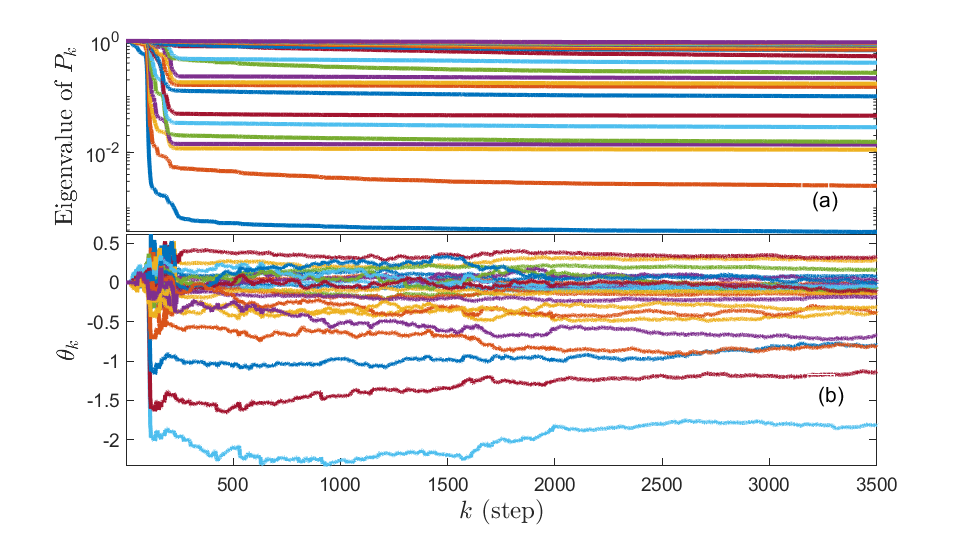}}
          \end{center}\vspace{-3mm}
          \caption{{\it  Example \ref{pid}: (a) shows the eigenvalues of $P_k$ of AISE. (b) shows the estimated coefficient vector $\theta_k$ of AISE.}} 
          \label{fig:theta_pid_aise}
          \end{figure}
           \begin{figure}[h!t]
            \begin{center}{\includegraphics[width=1\linewidth]{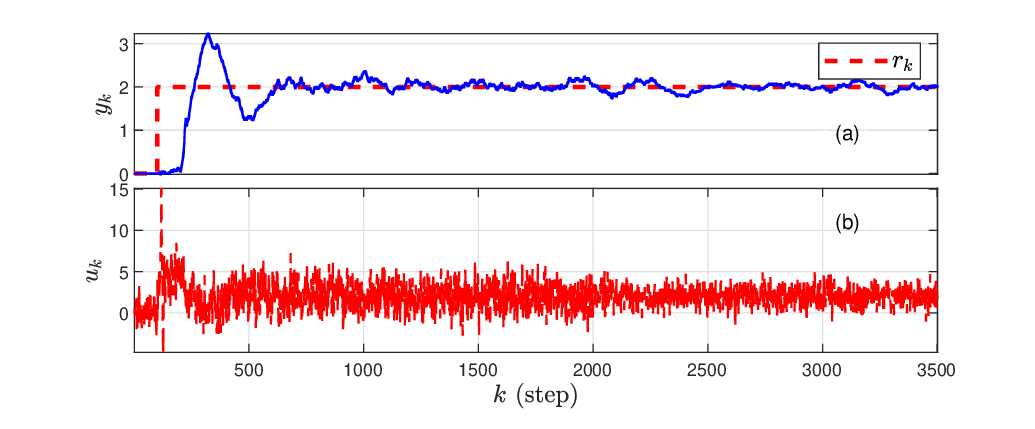}}
          \end{center}\vspace{-3mm}
          \caption{{\it  Example \ref{pid}:(a) shows the step response $y_{k}$ of the PID/AD controller with AISE/VRF-ER in the presence of sensor noise. The ${\rm RMSE}$ is $0.0998$. (b) shows the control $u_k$.}} 
          \label{fig:pid_res_pres_aise_vrf_er}
          \end{figure}
          \begin{figure}[h!t]
            \begin{center}{\includegraphics[width=1.05\linewidth]{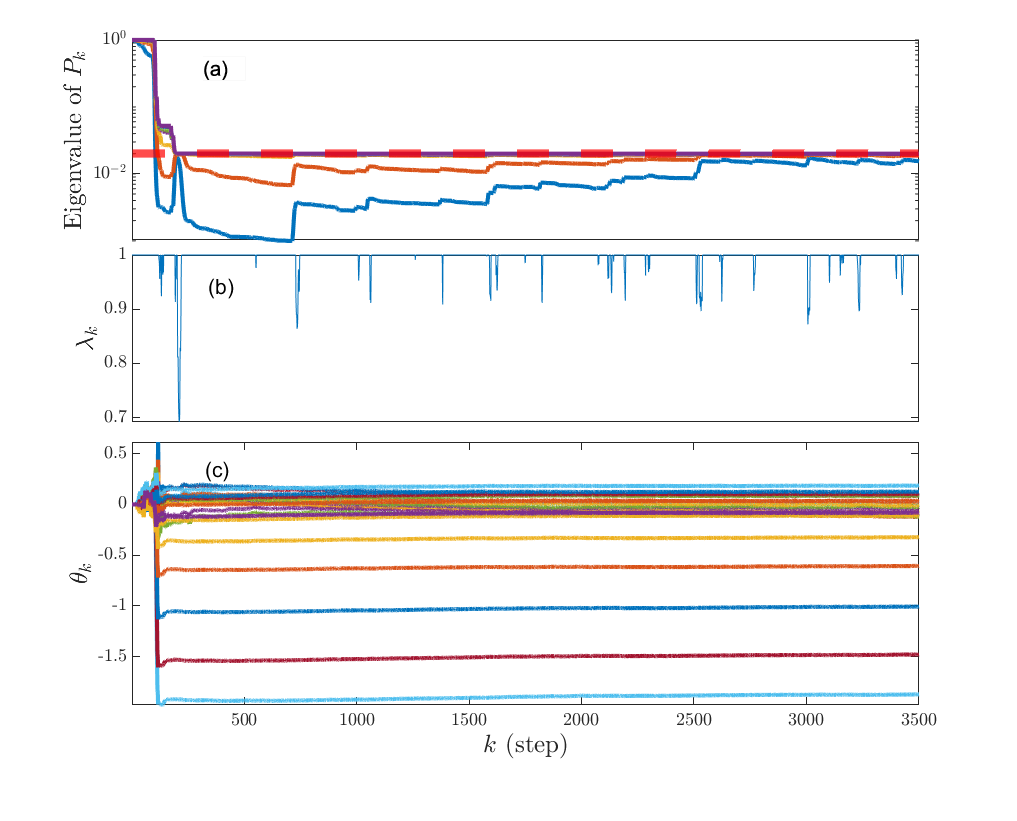}}
          \end{center}\vspace{-5mm}
          \caption{{\it  Example \ref{pid}: (a) shows the eigenvalues of $P_k$ of AISE/VRF-ER. The dashed red line shows $\eigmax(R_\infty^{-1})$. (b) shows the forgetting factor $\lambda_k$ in AISE/VRF-ER. (c) shows the estimated coefficient vector $\theta_k$ of AISE/VRF-ER.}} 
          \label{fig:theta_pid_aise_vrf_er}
          \end{figure}
          %
          %
          
\end{exam}

\subsection{Target tracking for collision avoidance} \label{subsec: carsim example}

To estimate the relative velocity of the target vehicle, AISE, and AISE/VRF-ER are applied to simulated position data of a target vehicle relative to a host vehicle. The CarSim simulator is used to simulate a scenario (depicted in Figure \ref{fig:Carsim_Pic}) in which an oncoming target vehicle (white van) slides over to the wrong lane. The host vehicle (blue van) performs an evasive maneuver to avoid a collision. Differentiation of the relative position data along the global $y$-axis (shown in Figure \ref{fig:Carsim_Pic}) is done to estimate the relative velocity along the same axis. The same method yields an estimate of the relative velocity along the global $x$-axis (not shown). We compare the performance of AISE and AISE/VRF-ER. The estimated relative velocity of the target vehicle can be used for collision avoidance purposes. 
          \begin{figure}[h!t]
              \begin{center}{\includegraphics[width=0.4\linewidth]{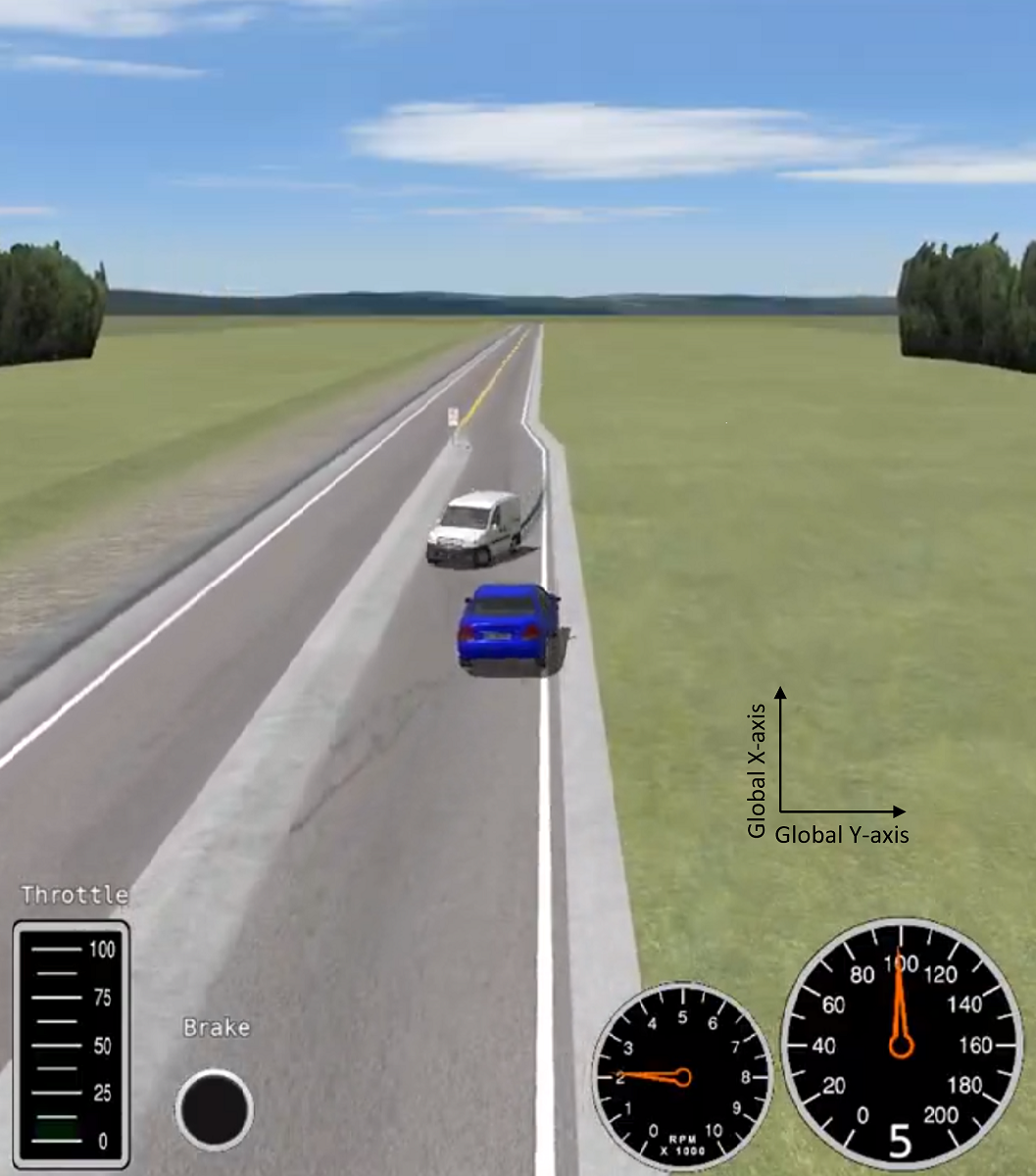}}
            \end{center}\vspace{-3mm}
            \caption{{\it Collision-avoidance scenario in CarSim}. In this scenario, the oncoming vehicle (the white van) enters the opposite lane, and the host vehicle (the blue van) performs an evasive maneuver to avoid a collision.} 
            \label{fig:Carsim_Pic}\vspace{-0.2mm}
          \end{figure} 
          \vspace{-0.2mm}
\begin{exam} \label{eg_carsim}
      {\it Differentiation of the CarSim data in the presence of stationary sensor noise.}
      The relative position data is corrupted with white Gaussian stationary sensor noise with SNR 40 dB. 
    For AISE, let $n_\rme = 25$, $n_\rmf = 50,$ $ R_z = 1, R_d = 10^{-6.7}, R_\theta = 10^{-0.1}I_{25}$, ${V_{1},V_{2}}$ are adapted, where $\eta_{\rmL} = 10^{-6}$, $\eta_{\rmU} = 10^{-2}$, and $\beta = 0.55$.  For AISE/VRF-ER  the parameters are the same as those of AISE and for VRF-ER $\eta = 0.8, t_n = 20, t_d = 80, \alpha = 0.08$, and $R_{\inf} = 10.$

    Figure \ref{fig:estimate_aise_carsim} compares the true first derivative with the estimates obtained from the AISE and AISE/VRF-ER. The estimate generated using AISE/VRF-ER is more accurate than the estimate generated using AISE. 
    Figure \ref{fig:theta_AISE_carsim} shows the eigenvalues of the covariance matrix $P_k$ and the parameter vector $\theta_k$ of AISE. The eigenvalues of $P_k$ decrease monotonically due to which AISE is not adapting to changes in the signal. 
    In Figure \ref{fig:theta_AISE_VRF-ER_carsim},   the eigenvalues of the covariance matrix $P_k$, the variable-rate-forgetting factor $\lambda_k$, and the parameter vector $\theta_k$ for AISE/VRF-ER are displayed. ER ensures that, for all $k \geq 0,$ $\eigmax(P_{k+1}) \le \max \{ \eigmax(P_k),\eigmax(R_\infty^{-1}) \}$, where $\eigmax(R_\infty^{-1})$  is indicated by the red dashed line in Figure \ref{fig:theta_AISE_VRF-ER_carsim}(a) (see Corollary 1 of \cite{lai2022exponential}).
    
    \vspace{-0.5mm}

          \begin{figure}[h!t]
              \begin{center}{\includegraphics[width=1\linewidth]{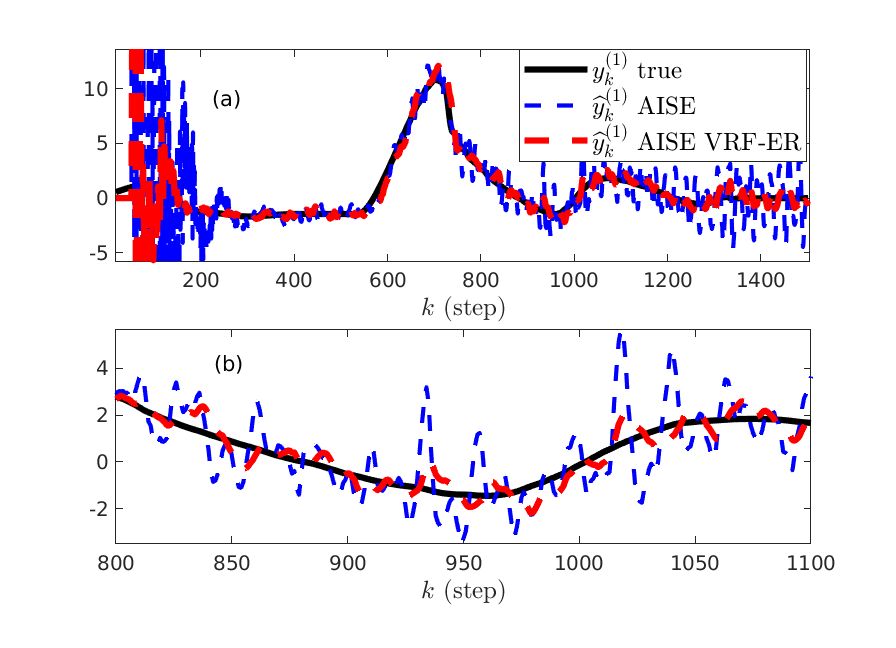}}
            \end{center}\vspace{-3mm}
            \caption{\textit{Example \ref{eg_carsim}: (a) The numerical derivatives estimated by AISE and AISE/VRF-ER follow the true first derivative $y^{(1)}_k$ after an initial transient of 200 steps.
        (b) shows a zoom of (a). At steady state, AISE/VRF-ER is more accurate than AISE. The SNR is 40 dB.}} 
            \label{fig:estimate_aise_carsim}\vspace{-0.2mm}
          \end{figure} 
          \vspace{-0.2mm}
                    \begin{figure}[h!t]
              \begin{center}{\includegraphics[width=1\linewidth]{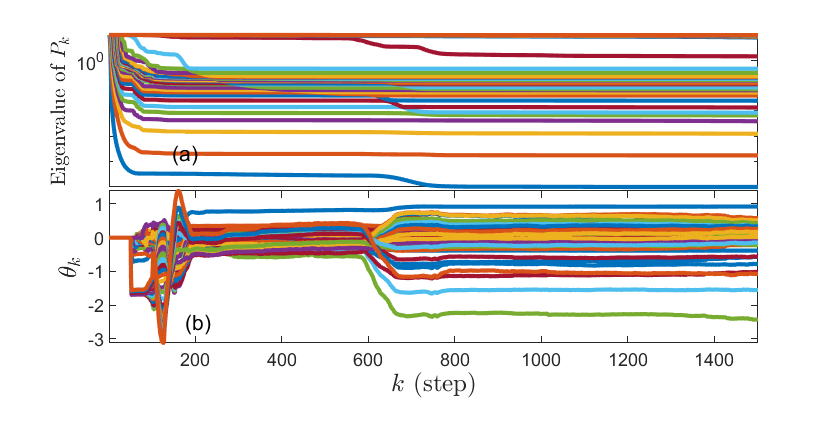}}
            \end{center}\vspace{-3mm}
            \caption{\textit{Example \ref{eg_carsim}: (a) shows the eigenvalues of $P_k$ of AISE. (b) shows the estimated coefficient vector $\theta_k$ of AISE.}} 
            \label{fig:theta_AISE_carsim}
          \end{figure} 
                    \begin{figure}[h!t]
              \begin{center}{\includegraphics[width=1.05\linewidth]{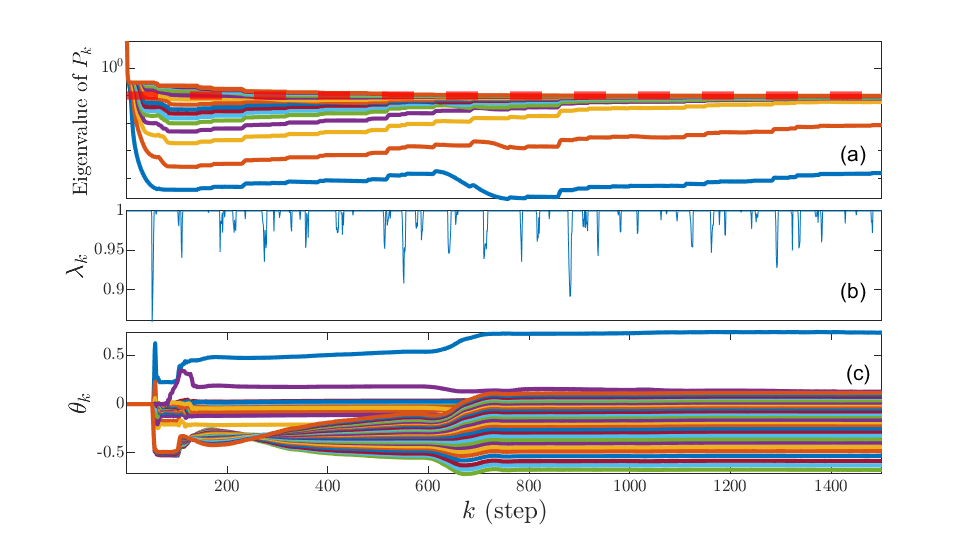}}
            \end{center}\vspace{-3mm}
            \caption{\textit{Example \ref{eg_carsim}: (a) shows the eigenvalues of $P_k$ of AISE/VRF-ER. The dashed red line shows $\eigmax(R_\infty^{-1})$. (b) shows the forgetting factor $\lambda_k$ of AISE/VRF-ER. (b) shows the estimated coefficient vector $\theta_k$ of AISE/VRF-ER. }} 
            \label{fig:theta_AISE_VRF-ER_carsim} \vspace{-0.10in}
          \end{figure} 
          %
          %
\end{exam}

\section{CONCLUSIONS}
This paper extended adaptive input and state estimation (AISE) by incorporating recursive least squares (RLS) with variable-rate forgetting and exponential resetting (VRF-ER). With VRF-ER, RLS uses the $F$-test for variable-rate forgetting as well as exponential resetting to constrain the eigenvalues of the error covariance matrix. 
%
%
AISE and AISE/VRF-ER were both used in digital PID control. We assessed the performance of these methods by considering the step response under the influence of sensor noise.
The performance of AISE and AISE/VRF-ER was compared with PID controllers that incorporate moving average and Butterworth filters to mitigate the noisy derivative control action.
Additionally, we demonstrated the ability of AISE and AISE/VRF-ER to estimate the relative velocity of a target vehicle based on noisy relative position data. These estimates are  potentially useful for enhancing collision-avoidance systems in autonomous vehicles.
Numerical examples showed that AISE/VRF-ER provides improved performance compared to AISE.
%


\section*{ACKNOWLEDGMENTS}
This research was supported by NSF grant CMMI 2031333.

\bibliography{bibpaper}
\bibliographystyle{ieeetr}

\end{document}